%% LyX 2.0.2 created this file.  For more info, see http://www.lyx.org/.
%% Do not edit unless you really know what you are doing.
\documentclass[final]{IEEEtran}
\usepackage[latin9]{inputenc}
\pagestyle{headings}
\usepackage{array}
\usepackage{float}
\usepackage{bm}
\usepackage{amsmath}
\usepackage{amssymb}
\usepackage{graphicx}
\usepackage{amsthm}
\usepackage{epstopdf}
\makeatletter

%%%%%%%%%%%%%%%%%%%%%%%%%%%%%% LyX specific LaTeX commands.
%% Because html converters don't know tabularnewline
\providecommand{\tabularnewline}{\\}
\floatstyle{ruled}
\newfloat{algorithm}{tbp}{loa}
\providecommand{\algorithmname}{Algorithm}
\floatname{algorithm}{\protect\algorithmname}

%%%%%%%%%%%%%%%%%%%%%%%%%%%%%% User specified LaTeX commands.

\hyphenation{op-tical net-works semi-conduc-tor}
\newcommand{\MYfooter}{\smash{
\hfil\parbox[t][\height][t]{\textwidth}{}\hfil\hbox{}}}

\def\ps@IEEEtitlepagestyle{%
\def\@oddfoot{\MYfooter}%
\def\@evenfoot{\MYfooter}}
\newtheorem{theorem}{Theorem}

% adjust as needed
\bibliographystyle{IEEEtran}
\addtolength{\footskip}{0\baselineskip}
\addtolength{\textheight}{-1\baselineskip}
\newcommand{\inner}[2]{{\langle#1,#2\rangle}}

\makeatother

\begin{document}
\global\long\def\bd{{\bf d}}

\global\long\def\real{\mathbb{R}}

\global\long\def\bx{{\bf x}}

\global\long\def\bb{{\bf b}}

\global\long\def\bg{{\bf g}}

\global\long\def\bd{{\bf d}}

\global\long\def\bo{{\bf 0}}

\global\long\def\by{{\bf y}}
 \global\long\def\bz{{\bf z}}
 \global\long\def\bh{{\bf h}}

\global\long\def\bw{{\bf w}}

\global\long\def\bba{{\bf A}}
 \global\long\def\bbb{{\bf B}}

\global\long\def\bbf{{\bf F}}
 \global\long\def\bbx{{\bf X}}
 \global\long\def\bbu{{\bf U}}
 \global\long\def\bbi{{\bf I}}
 \global\long\def\argmin{{\displaystyle \mathop{\mbox{argmin}}}}
 \global\long\def\argmax{{\displaystyle \mathop{\mbox{argmax}}}}

\global\long\def\supp{{\displaystyle \mathop{\mbox{{\rm supp}}}}}

\title{GESPAR: Efficient Phase Retrieval of\\
 Sparse Signals}

\author{Yoav Shechtman, Amir Beck, and Yonina~C.~Eldar,~\IEEEmembership{Fellow,~IEEE}
\thanks{The work of A. Beck is supported in part by the Israel Science Foundation
under Grant no. 253/12. The work of Y. Eldar is supported in part
by the Israel Science Foundation under Grant no. 170/10, in part by
the Ollendorf Foundation, in part by the SRC, and in part by the Intel
Collaborative Research Institute for Computational Intelligence (ICRI-CI).%
} %
\thanks{Y. Shechtman is with the department of Physics, Technion--Israel Institute
of Technology, Haifa, Israel 32000 (e-mail: joe@tx.technion.ac.il).%
} %
\thanks{A. Beck is with the department of Industrial Engineering, Technion--Israel
Institute of Technology, Haifa, Israel 32000 (e-mail: becka@ie.technion.ac.il).%
} %
\thanks{Y. C. Eldar is with the department of Electrical Engineering, Technion--Israel
Institute of Technology, Haifa, Israel 32000 (e-mail: yonina@ee.technion.ac.il
). %
}}
\maketitle
\begin{abstract}
We consider the problem of phase retrieval, namely,
recovery of a signal from the magnitude of its Fourier transform,
or of any other linear transform. Due to the loss of the Fourier phase
information, this problem is ill-posed. Therefore, prior information
on the signal is needed in order to enable its recovery. In this work
we consider the case in which the signal is known to be sparse, i.e.,
it consists of a small number of nonzero elements in an appropriate
basis. We propose a fast local search method for recovering a sparse
signal from measurements of its Fourier transform (or other linear transform) magnitude which
we refer to as GESPAR: GrEedy Sparse PhAse Retrieval. Our algorithm
does not require matrix lifting, unlike previous approaches, and therefore
is potentially suitable for large scale problems such as images. Simulation
results indicate that GESPAR is fast and more accurate than existing
techniques in a variety of settings.
\end{abstract}

\section{Introduction}

% no \IEEEPARstart
Recovery of a signal from the magnitude of its Fourier transform,
also known as phase retrieval, is of great interest in applications
such as optical imaging \cite{optics}, crystallography \cite{crystallography},
and more \cite{PR-book}. Due to the loss of Fourier phase information,
the problem (in 1D) is generally ill-posed. A common approach to overcome
this ill-posedeness is to exploit prior information on the signal.
A variety of methods have been developed that use such prior information,
which may be the signal's support (region in which the signal is nonzero),
non-negativity, or the signal's magnitude \cite{Fienup}, \cite{GS}.

A popular class of algorithms is based on the use of alternate projections
between the different constraints. In order to increase the probability
of correct recovery, these methods require the prior information to
be very precise, for example, exact/or ``almost'' exact knowledge
of the support set. Since the projections are generally onto non-convex
sets, convergence to a correct recovery is not guaranteed \cite{convex}.
A more recent approach is to use matrix-lifting of the problem which
allows to recast phase retrieval as a semi-definite programming (SDP)
problem \cite{candesPR}. The algorithm developed in \cite{candesPR}
does not require prior information about the signal but instead uses
multiple signal measurements (e.g., using different illumination settings,
in an optical setup).

In order to obtain more robust recovery without requiring multiple
measurements, we develop a method that exploits signal sparsity. Existing
approaches aimed at recovering sparse signals from their Fourier magnitude
belong to two main categories: SDP-based techniques \cite{Yoav},\cite{Babak},\cite{Henrik},\cite{Mallat}
and algorithms that use alternate projections (Fienup-type methods)
\cite{sparseFienup}. Phase retrieval of sparse signals can be viewed
as a special case of the more general quadratic compressed sensing
(QCS) problem considered in \cite{Yoav}. Specifically, QCS treats
recovery of sparse vectors from quadratic measurements of the form
$y_{i}=\mathbf{x}^{T}\mathbf{A}_{i}\mathbf{x},\,\,\, i=1,\ldots,N$,
where $\mathbf{x}$ is the unknown sparse vector to be recovered,
$y_{i}$ are the measurements, and $\mathbf{A}_{i}$ are known matrices.
In (discrete) phase retrieval, $\mathbf{A}_{i}=\bbf_{i}^{*}\bbf_{i}$
where $\bbf_{i}$ is the $i$th row of the discrete Fourier transform
(DFT) matrix. QCS is encountered, for example, when imaging a sparse
object using partially spatially-incoherent illumination \cite{Yoav}.

A general approach to QCS was developed in \cite{Yoav} based on matrix
lifting. More specifically, the quadratic constraints where lifted
to a higher dimension by defining a matrix variable $\bbx=\bx\bx^{T}$.
The problem was then recast as an SDP involving minimization of the
rank of the lifted matrix subject to the recovery constraints as well
as row sparsity constraints on $\bbx$. An iterative thresholding
algorithm based on a sequence of SDPs was then proposed to recover
a sparse solution. Similar SDP-type ideas were recently used in the
context of phase retrieval \cite{Babak},\cite{Henrik}. However,
due to the increase in dimension created by the matrix lifting procedure,
the SDP approach is not suitable for large-scale problems.

Another approach for phase retrieval of sparse signals is adding a
sparsity constraint to the well-known iterative error reduction algorithm
of Fienup \cite{sparseFienup}. In general, Fienup-type approaches
are known to suffer from convergence issues and often do not lead
to correct recovery especially in 1D problems; simulation results
show that even with the additional information that the input is sparse,
convergence is still problematic and the algorithm often recovers
erroneous solutions.

In this paper we propose an efficient method for phase retrieval which
also leads to good recovery performance. Our approach is based on
a fast 2-opt local search method (see \cite{P98} for an excellent
introduction to such techniques) applied to a sparsity constrained
non-linear optimization formulation of the problem. We refer to the
resulting algorithm as GESPAR: GrEedy Sparse PhAse Retrieval. Sparsity
constrained nonlinear optimization problems have been considered recently
in \cite{AmirYonina}; the method derived in this paper is motivated
-- although different in many aspects -- by the local search-type
techniques of \cite{AmirYonina}. In essence, GESPAR is a local-search
method, where the support of the sought signal is updated iteratively,
according to selection rules described in detail in Section~\ref{sec:GrEedy-Sparse-PhAse}.
A local minimum of the objective function is then found given the
current support using the damped Gauss Newton algorithm. Theorem~\ref{thm:dgn}
establishes convergence of the iterations to a stationary point of
the objective under suitable conditions.

We demonstrate through numerical simulations that GESPAR is both efficient
and more accurate than current techniques. Several other aspects of
the algorithm are explored via simulations such as robustness to noise,
and scalability for larger dimensions. In the simulations performed
we found that the number of measurements needed for reliable recovery
from Fourier magnitudes seems to scale like $s^{3}$, where $s$ is
the sparsity level.

GESPAR is applicable to recovery of a sparse vector from general quadratic
measurements, and is not restricted to Fourier magnitude measurements.
Nonetheless, when the measurements are obtained in the Fourier domain,
the algorithm can be implemented efficiently by exploiting the fast
Fourier transform, as we discuss in Section~\ref{sec:f_implement}.

The remainder of the paper is organized as follows. We formulate the
problem in Section~\ref{sec:Problem-Formulation}. Section~\ref{sec:GrEedy-Sparse-PhAse}
describes our proposed algorithm in detail and establishes convergence
of the local iterations. Implementation details for Fourier-based
problems are provided in Section~\ref{sec:f_implement}. Extensive
numerical experiments illustrating the empirical performance of GESPAR
are presented in Section~\ref{sec:Numerical-Simulations}.

\section{\label{sec:Problem-Formulation}Problem Formulation}

\subsection{Sparse Phase Retrieval: Fourier Measurements}

We are given a vector of measurements $\by\in\mathbb{{R}}^{N}$, that
corresponds to the magnitude-squared of an $N$ point DFT of a vector
$\bx\in\mathbb{{R}}^{N}$, i.e.:
\begin{equation}
y_{l}=\left|\sum_{m=1}^{n}x_{m}e^{-\frac{2\pi j(m-1)(l-1)}{N}}\right|^{2},\,\,\, l=1,\ldots,N.
\end{equation}
Here $\bx$ is constructed by $(N-n)$ zero padding of a vector
$\bar{\bx}\in\real^{n}$ with elements $x_{i},\; i=1,2,\ldots,n$.
Denoting by $\bbf\in\mathbb{{C}}^{N\times N}$ the DFT matrix with
elements $\exp\big\{-\frac{2\pi j(m-1)(l-1)}{N}\big\}$, we can express
$\by$ as $\by=|\bbf\bx|^{2}$, where $|\cdot|^{2}$ denotes the element-wise
absolute-squared value. The vector $\bar{\bx}$ is known to be $s$-sparse,
that is, it contains at most $s$ nonzero elements. Our goal is to
recover $\bar{\bx}$, or $\bx$, given the measurements $\by$ and
the sparsity level $s$.

The mathematical formulation of the problem that we consider consists
of minimizing the sum of squared errors subject to the sparsity constraint:
\begin{equation}
\begin{array}{ll}
\min_{\bx} & \sum_{i=1}^{N}(|\bbf_{i}\bx|^{2}-y_{i})^{2}\\
\mbox{s.t.} & \|\bx\|_{0}\leq s,\\
 & \text{supp}(\bx)\subseteq\{1,2,\ldots,n\},\\
 & \bx\in\mathbb{R}^{N},
\end{array}\label{eq:2}
\end{equation}
 where $\bbf_{i}$ is the $i$th row of the DFT matrix $\bbf$, and
$\|\cdot\|_{0}$ stands for the zero-``norm'', that is, the number
of nonzero elements. Note that the unknown vector $\bx$ can only
be found up to trivial degeneracies that are the result of the loss
of Fourier phase information: circular shift, global phase, and signal
``mirroring''.

\subsubsection*{Support Information}

To aid in solving the phase retrieval problem, we can rely on the
fact that the autocorrelation sequence of $\bar{\bx}$ (the first
$n$ components of $\bx$) may be determined from $\by$ if $N\geq2n-1$. Specifically,
let
\begin{equation}
g_{m}=\sum_{i=1}^{n}x_{i}x_{i+m},\quad m=-(n-1),\ldots,n-1
\end{equation}
 denote the correlation sequence of length $2n-1$. If we choose $N\geq2n-1$,
then $\{g_{m}\}$ can be obtained by taking the inverse DFT of $\by$.

Determining $g_{m}$ requires oversampling, or
zero-padding of $\mathbf{x}$. While this additional information improves
the recovery performance, as is demonstrated in the simulations section,
it is not actually needed for GESPAR to work. Nevertheless, when this
information is available, GESPAR exploits it, in the following way.
First of all, we assume that no support cancelations occur in $\{g_{m}\}$,
namely, if $x_{i}\neq0$ and $x_{j}\neq0$ for some $i,j$, then $g_{|i-j|}\neq0$.
When the values of $\bx$ are random, this is true with probability
1. This fact can be used in GESPAR in order to obtain initial information
on the support of $\bx$, which we capture by two sets $J_1$ and $J_2$.

Denote by $J_{1}$ the set of indices known
in advance to be in the support.
To derive the set $J_{1}$, note that due to the existing degree of
freedom relating to shift-invariance of $\bx$, the index 1 can be
assumed to be in the support, thereby removing this degree of freedom;
as a consequence, the index corresponding to the last nonzero element
in the autocorrelation sequence is also in the support, i.e.
\[
i_{max}=1+\underset{i}{\text{argmax}}\{i:g_{i}\neq0\}.
\]
 Therefore, $J_{1}=\{1,i_{max}\}.$

Next, we denote by $J_{2}$ the set of indices that are candidates
for being in the support, meaning the indices that are \textit{not}
known in advance to be in the off-support (the complement of the support).
Specifically, $J_{2}$ contains the set of all indices $k\in\{1,2,\ldots,n\}$
such that $g_{k-1}\neq0$. Obviously, since we assume that $x_{k}=0$
for $k>n$, we have $J_{2}\subseteq\{1,2,\ldots,n\}$. As a concrete
example, consider the signal $\mathbf{\bar{x}}=(2,0,0,-1,0,-1.5)^{T}$.
The corresponding $11$ point autocorrelation function $g_{m}$ is
given by $g_{m}=(-3,0,-2,1.5,0,7.25,0,1.5,-2,0,-3)^{T}$. The set
$J_{1}$ is therefore $J_{1}=\{1,6\}$. Next, by examining the zeros
of $g_{m}$, and using our assumption of no support-cancelations,
we deduce that there are no two non-zero elements $x_{i}\neq0$ and
$x_{j}\neq0$ such that $|i-j|=1,4$. Therefore, forcing the first
element in $\mathbf{x}$ to be non-zero, which removes the shift-invariance
degeneracy, immediately implies that $x_{2}=x_{5}=0.$ In this way
$J_{2}$ is determined as $J_{2}=\{1,3,4,6\}.$ Defining $\bba_{i}=\Re(\bbf_{i})^{T}\Re(\bbf_{i})+\Im(\bbf_{i})^{T}\Im(\bbf_{i})\in\real^{N\times N}$,
problem (\ref{eq:2}) along with the support information can be written
as
\begin{equation}
\begin{array}{ll}
\min_{\bx} & f(\bx)\equiv\sum_{i=1}^{N}(\bx^{T}\bba_{i}\bx-y_{i})^{2}\\
\mbox{s.t.} & \|\bx\|_{0}\leq s,\\
 & J_{1}\subseteq\supp(\bx)\subseteq J_{2},\\
 & \bx\in\mathbb{R}^{N},
\end{array}\label{eq:2reform}
\end{equation}
 which will be the formulation to be studied.

Note that even with knowledge of the exact support of $\mathbf{x}$
there is no guarantee for uniqueness beyond the aforementioned trivial
degeneracies. Consider for example the two vectors $\mathbf{u}=(1,0,-2,0,-2)$
and $\mathbf{v=}(1-\sqrt{3},0,1,0,1+\sqrt{3})$. Both of these vectors
are $s=3$ sparse, and they have the same autocorrelation function
$g_{m}=(-2,0,2,0,9,0,2,0,-2)$. This ambiguity therefore cannot be
resolved using any method that uses sparsity (even exact support information)
and autocorrelation (or Fourier magnitude) measurements alone.

Finally, when the measurements are noisy, the autocorrelation information
is not very useful for support estimation, since very small (noise
level) values in the autocorrelation sequence cannot be treated as
zero. For this reason, the autocorrelation-derived support information
is not used in GESPAR at all in the noisy case. Formally, ignoring
this information is equivalent to setting $J_{1}=\{1\}$ and $J_{2}=\{1,2,\ldots,n\}$.

\subsection{Sparse Phase Retrieval: General Measurements}

Although the problem formulation above assumes Fourier measurements
and sparsity of $\mathbf{\bar{x}}$, we show below that our approach
applies to arbitrary quadratic measurements of \textbf{$\mathbf{\bar{x}}$}.
This includes the case in which $\bar{\mathbf{x}}$ is sparse in a basis other than the identity basis.
In fact, in this general case, the formulation given in \eqref{eq:2reform}
remains the same, with the only change being the definition of the
matrices $\mathbf{A}_{i}$.

Consider the phase retrieval problem with respect to arbitrary linear measurements, so that
\begin{equation}
\mathbf{y}_{i}=|\inner{\bm{\phi}_{i}}{\mathbf{x}}|^{2},
\end{equation}
for a set of measurement vectors $\bm{\phi}_{i}\in \mathbb{R}^n,i=1,\dots,N$.
The corresponding phase retrieval problem can be written as in
\eqref{eq:2reform} with $\mathbf{A}_{i}=\bm{\phi}_{i}\bm{\phi}_{i}^{T}$.
Similarly, suppose that \textbf{$\mathbf{\bar{x}}=\mathbf{D}\mathbf{z}$}, where
$\mathbf{D\in\mathbb{R}^{\mathnormal{n\times b}}}$ is some basis in
which $\mathbf{\bar{x}}$ is sparse, and $\mathbf{z}\in\mathbb{R}^{b}$
is a sparse vector. In this case
$\mathbf{A}_{i}=\mathbf{D}^{T}\bm{\phi}_{i}\bm{\phi}_{i}^{T}\mathbf{D}$.
Thus, our formulation can accommodate arbitrary sparsity bases and general quadratic measurements.

In the next section, we propose GESPAR---an iterative local-search
based algorithm for solving (\ref{eq:2reform}). We note that although
in the context of phase retrieval the parameters $\bba_{i},J_{1},J_{2}$
have special properties (e.g., $\bba_{i}$ is positive semidefinite
of at most rank 2, $|J_{1}|=2$), we will not use these properties
in GESPAR. Therefore, our approach is capable of handling general
instances of (\ref{eq:2reform}) with the sole assumption that $\bba_{i}$
is symmetric for any $i=1,2,\ldots,N$. In the Fourier case, the algorithm
can be implemented more efficiently, as we discuss in Section~\ref{sec:f_implement}.

\section{\label{sec:GrEedy-Sparse-PhAse}GrEedy Sparse PhAse Retrieval (GESPAR) }

\subsection{The Damped Gauss-Newton Method}

Before describing our algorithm, we begin by presenting a variant
of the damped Gauss-Newton (DGN) method \cite{B99},\cite{B96} that
is in fact the core step of our approach. The DGN method is invoked
in order to solve the problem of minimizing the objective function
$f$ over a \textit{given} support $S\subseteq\{1,2,\ldots,n\}\;(|S|=s)$:
\begin{equation}
\min\{f(\bbu_{S}\bz):\bz\in\mathbb{R}^{s}\},\label{fus}
\end{equation}
 where $\bbu_{S}\in\mathbb{R}^{N\times s}$ is the matrix consisting
of the columns of the identity matrix $\bbi_{N}$ corresponding to
the index set $S$. With this notation, (\ref{fus}) can be explicitly
written as
\begin{equation}
\min\left\{ g(\bz)\equiv\sum_{i=1}^{N}(\bz^{T}\bbu_{S}^{T}\bba_{i}\bbu_{S}\bz-y_{i})^{2}:\bz\in\real^{s}\right\} .\label{uau}
\end{equation}

The minimization in (\ref{uau}) is a nonlinear least-squares problem.
A natural approach for tackling it is via the DGN method. This algorithm
begins with an arbitrary vector $\bz_{0}$. In our simulations, we
choose it as a white random Gaussian vector with zero mean and unit
variance. At each iteration, all the terms inside the squares in $g(\bz)$
are linearized around the previous guess. Namely, we write $g(\bz)$
from (\ref{uau}) as:
\begin{equation}
g(\mathbf{z})=\sum_{i=1}^{N}h_{i}^{2}(\mathbf{z}),
\end{equation}
 with $h_{i}(\bz)=\bz^{T}\bbb_{i}\bz-y_{i}$, and $\bbb_{i}=\bbu_{S}^{T}\bba_{i}\bbu_{S}$.
At each step we replace $h_{i}$ by its linear approximation around
$\bz_{k-1}$:
\begin{eqnarray}
h_{i} & \approx & h_{i}(\bz_{k-1})+\nabla h_{i}(\bz_{k-1})^{T}(\bz-\bz_{k-1})\nonumber \\
 & = & \bz_{k-1}^{T}\bbb_{i}\bz_{k-1}-y_{i}+2(\bbb_{i}\bz_{k-1})^{T}(\bz-\bz_{k-1}).
\end{eqnarray}
 We then choose $\bz_{k}$ to be the solution of the problem {\small
\begin{equation}
\underset{\mathbf{z}}{\text{min}}\sum_{i=1}^{N}(\bz_{k-1}^{T}\bbb_{i}\bz_{k-1}-y_{i}+2(\bbb_{i}\bz_{k-1})^{T}(\bz-\bz_{k-1}))^{2}.\label{eq:taylor}
\end{equation}
 }{\small \par}

Problem (\ref{eq:taylor}) can be written as a linear least-squares
problem
\begin{equation}
\tilde{\mathbf{z}}_{k}=\arg\min\|J(\bz_{k-1})\bz-\mathbf{b}_{k}\|_{2}^{2}\, \label{eq:ls_problem}
\end{equation}
 with the $i$th row of $J(\bz_{k-1})$ being $\nabla h_{i}(\bz_{k-1})^{T}=2(\bbb_{i}\bz_{k-1})^{T}$,
and the $i$th component of $\bb_{k}$ given by $y_{i}+\bz_{k-1}^{T}\bbb_{i}\bz_{k-1}$
for $i=1,2,\ldots,N$. The solution $\tilde{\bz}_{k}$ is equal to
$\tilde{\bz}_{k}=(J(\bz_{k-1})^{T}J(\bz_{k-1}))^{-1}J(\bz_{k-1})^{T}\bb_{k}$.
We then define a direction vector $\bd_{k}=\bz_{k-1}-\tilde{\bz}_{k}$.
This direction is used to update the solution with an appropriate
stepsize designed to guarantee the convergence of the method to a
stationary point of $g(\bz)$. The stepsize is chosen via a simple
backtracking procedure. Algorithm \ref{alg:dgn} describes the DGN
method in detail. In our implementation the stopping parameters were
chosen as $\varepsilon=10^{-4}$ and $L=100$.

The following theorem establishes the rate of convergence of the norm
of the gradient of the objective function to zero, and consequently
proves that the limit points of the sequence are stationary points.

\begin{theorem} \label{thm:dgn} Let $\{\bz_{k}\}$ be the sequence
generated by the DGN method. Assume that $\sum_{i=1}^{N}\bbb_{i}\succ\bo$
and that there exists $\underline{\lambda}>0$ such that for all $k$
\[
\lambda_{\min}(J(\bz_{k})^{T}J(\bz_{k}))\geq\underline{\lambda}.
\]
 Then $\nabla g(\bz_{k})\rightarrow\bo$ as $k\rightarrow\infty$
and there exists a constant $C>0$ such that
\begin{equation}
\min_{p=1,\ldots,k}\|\nabla g(\bz_{p})\|\leq\frac{\sqrt{g(\bz_{0})}}{C\sqrt{k+1}}.\label{mr}
\end{equation}
 Moreover, each limit point of the sequence is a stationary point
of $g$. \end{theorem}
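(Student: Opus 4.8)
The plan is to recognize the update as a damped Newton-type step whose direction is gradient-related, and then run a standard backtracking (Armijo) descent analysis confined to a compact sublevel set. First I would record the exact relation between the Gauss--Newton direction and the gradient. Writing $\nabla g(\bz)=2J(\bz)^{T}\bh(\bz)$ with $\bh(\bz)=(h_{1}(\bz),\dots,h_{N}(\bz))^{T}$, a direct computation from the normal equations $J(\bz_{k-1})^{T}J(\bz_{k-1})\tilde{\bz}_{k}=J(\bz_{k-1})^{T}\bb_{k}$, the definition of $\bb_{k}$, and the symmetry of the $\bbb_{i}$ gives
\[
J(\bz_{k-1})^{T}J(\bz_{k-1})\,\bd_{k}=\tfrac12\nabla g(\bz_{k-1}),\qquad\text{i.e.}\qquad \bd_{k}=\tfrac12\bigl(J^{T}J\bigr)^{-1}\nabla g(\bz_{k-1}).
\]
Since the hypothesis $\lambda_{\min}(J(\bz_{k})^{T}J(\bz_{k}))\ge\underline{\lambda}>0$ makes $J^{T}J$ uniformly positive definite, the step direction $-\bd_{k}$ is a strict descent direction, with $\nabla g(\bz_{k-1})^{T}(-\bd_{k})=-\tfrac12\nabla g(\bz_{k-1})^{T}(J^{T}J)^{-1}\nabla g(\bz_{k-1})<0$ whenever $\nabla g(\bz_{k-1})\ne\bo$.

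Next I would use the assumption $\sum_{i=1}^{N}\bbb_{i}\succ\bo$ to keep the iterates in a compact set. The leading quartic part of $g$ is $\sum_{i}(\bz^{T}\bbb_{i}\bz)^{2}$, which vanishes only if $\bz^{T}\bbb_{i}\bz=0$ for every $i$; summing these forces $\bz^{T}(\sum_{i}\bbb_{i})\bz=0$, so positive definiteness gives $\bz=\bo$. A compactness argument on the unit sphere then yields $\sum_{i}(\bz^{T}\bbb_{i}\bz)^{2}\ge\alpha\|\bz\|^{4}$ for some $\alpha>0$, so $g$ is coercive and the sublevel set $\mathcal{L}=\{\bz:g(\bz)\le g(\bz_{0})\}$ is compact. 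As every accepted backtracking step decreases $g$, all iterates lie in $\mathcal{L}$. On $\mathcal{L}$ I obtain uniform bounds $\lambda_{\max}(J^{T}J)\le\Lambda$ and, via $\|(J^{T}J)^{-1}\|\le1/\underline{\lambda}$, a bound $\|\bd_{k}\|\le M$; hence each search segment $\{\bz_{k-1}-t\bd_{k}:t\in[0,1]\}$ stays in a fixed compact enlargement of $\mathcal{L}$ on which $\nabla g$ is Lipschitz with some constant $L_{g}$.

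With these constants in hand, the remaining steps are routine. The standard backtracking argument---combining the failure of the previous trial step with the descent lemma
\[
g(\bz_{k-1}-t\bd_{k})\le g(\bz_{k-1})-t\,\nabla g(\bz_{k-1})^{T}\bd_{k}+\tfrac{L_{g}}{2}t^{2}\|\bd_{k}\|^{2}
\]
---shows that the accepted stepsize is bounded below, $t_{k}\ge\underline{t}>0$, where $\underline{t}$ depends only on $L_{g}$, the Armijo parameters, and the ratio bound $\nabla g(\bz_{k-1})^{T}\bd_{k}/\|\bd_{k}\|^{2}\ge 2\underline{\lambda}^{2}/\Lambda$. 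Together with the Armijo acceptance inequality and $\nabla g(\bz_{k-1})^{T}\bd_{k}\ge\tfrac{1}{2\Lambda}\|\nabla g(\bz_{k-1})\|^{2}$, this produces a uniform sufficient-decrease bound
\[
g(\bz_{k-1})-g(\bz_{k})\ge c\,\|\nabla g(\bz_{k-1})\|^{2},\qquad c>0.
\]

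Finally I would telescope. Summing the sufficient-decrease inequality for $k=1,\dots,K$ and using $g\ge0$ gives $\sum_{k=0}^{K-1}\|\nabla g(\bz_{k})\|^{2}\le g(\bz_{0})/c$ for every $K$, so the series $\sum_{k}\|\nabla g(\bz_{k})\|^{2}$ converges; in particular $\nabla g(\bz_{k})\to\bo$. Bounding the minimal term by the average, $(k+1)\min_{p\le k}\|\nabla g(\bz_{p})\|^{2}\le g(\bz_{0})/c$, yields the rate \eqref{mr} with $C=\sqrt{c}$, and stationarity of every limit point follows immediately from the continuity of $\nabla g$. I expect the main obstacle to be the compactness step: the Lipschitz/backtracking machinery only works on a bounded region, so the crux is to extract the coercivity of the quartic $g$ from $\sum_{i}\bbb_{i}\succ\bo$ and to convert the two spectral hypotheses into the uniform constants $\Lambda$, $M$, $L_{g}$ that control $\|\bd_{k}\|$, $\lambda_{\max}(J^{T}J)$, and the stepsize; once the iterates are confined to $\mathcal{L}$, the rest follows from textbook descent estimates.
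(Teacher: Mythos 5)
Your proposal is correct and follows essentially the same route as the paper's proof: identify $\bd_{k}=\tfrac{1}{2}(J(\bz_{k-1})^{T}J(\bz_{k-1}))^{-1}\nabla g(\bz_{k-1})$ as a descent direction, confine the iterates to a compact set using $\sum_{i=1}^{N}\bbb_{i}\succ\bo$, use uniform bounds and the descent lemma to lower-bound the backtracking stepsize and derive the sufficient-decrease inequality $g(\bz_{k-1})-g(\bz_{k})\geq c\|\nabla g(\bz_{k-1})\|^{2}$, and then telescope to obtain gradient convergence, the rate, and stationarity of limit points. The only (cosmetic) difference is in the boundedness step, where you establish coercivity of the quartic $g$ via a compactness argument on the unit sphere, while the paper gets an explicit radius $\|\bz_{k}\|^{2}\leq\alpha$ directly from the Cauchy--Schwarz inequality and $y_{i}\geq0$; both arguments serve the identical purpose.
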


\begin{proof} See Appendix~\ref{sec:proof}. \end{proof}

Note that the proof requires $J(\bz_{k})$ to have full column rank,
and in fact that the minimum eigenvalues of $J(\bz_{k})^{T}J(\bz_{k})$
are uniformly bounded below. In the vast majority of our runs this
assumption held true; however, we did encounter in our numerical experiments
a few cases in which this condition was not valid. In these situations,
our implementation chose one of the optimal solutions of the corresponding
least-squares problem. We noticed that these cases had negligible
effect on the results.

\noindent
\begin{algorithm}[ftp]
\caption{\label{alg:dgn}DGN for solving (\ref{uau})}

\textbf{Input:} $(\bba_{i},y_{i},S,\varepsilon,L)$. \\
 $\bba_{i}\in\real^{N\times N},i=1,2,\ldots,N$ - symmetric matrices.\\
 $y_{i}\in\real,i=1,2,\ldots,N.$\\
 $S\subseteq\{1,2,\ldots,n\}$ - index set.\\
 $\varepsilon$ - stopping criteria parameter.\\
 $L$ - maximum allowed iterations.

\textbf{Output:} $\bz$ - an optimal (or suboptimal) solution of (\ref{uau}).
\\
 \line(1,0){250}

\textbf{Initialization:} Set $\bbb_{i}=\bbu_{S}^{T}\bba_{i}\bbu_{S},t_{0}=0.5$,
$\bz_{0}$ a random vector. \\

\textbf{General Step $k(k\geq1)$:} Given the iterate $\bz_{k-1}$,
the next iterate is determined as follows:

1. \textbf{Gauss-Newton Direction:} Let $\tilde{\bz}_{k}$ be the
solution of the linear least-squares problem (\ref{eq:ls_problem}), given
by:
\[
\tilde{\bz}_{k}=(J(\bz_{k-1})^{T}J(\bz_{k-1}))^{-1}J(\bz_{k-1})^{T}\bb_{k}
\]
 with the $i$th row of $J(\bz_{k-1})$ being $2(\bbb_{i}\bz_{k-1})^{T}$,
and the $i$th component of $\bb_{k}$ given by $y_{i}+\bz_{k-1}^{T}\bbb_{i}\bz_{k-1}$. The Gauss-Newton direction is
\[
\bd_{k}=\bz_{k-1}-\tilde{\bz}_{k}.
\]
 \\
 2. \textbf{Stepsize Selection via Backtracking:} set $u=\min\{2t_{k-1},1\}$.
Choose a stepsize $t_{k}$ as $t_{k}=(\frac{1}{2})^{m}u$, where $m$
is the minimal nonnegative integer for which
\[
g\left(\bz_{k-1}-\left(\frac{1}{2}\right)^{m}u\bd_{k}\right)<g(\bz_{k-1})-u\left(\frac{1}{2}\right)^{m+1}\nabla g(\bz_{k-1})^{T}\bd_{k},
\]
 with $g(\bz)$ given by \eqref{uau}. \\
 3. \textbf{Update:} set $\bz_{k}=\bz_{k-1}-t_{k}\bd_{k}.$\\
 4. \textbf{Stopping rule:} STOP if either $\|\bz_{k}-\bz_{k-1}\|<\varepsilon$
or $k>L$.
\end{algorithm}

\subsection{The 2-opt Local Search Method}

The GESPAR method consists of repeatedly invoking a local-search method
on an initial random support set. In this section we describe the
local search procedure. At the beginning, the support is chosen to
be a set of $s$ random indices chosen to satisfy the support constraints
$J_{1}\subseteq S\subseteq J_{2}$. Then, at each iteration a swap
between a support and an off-support index is performed such that
the resulting solution via the DGN method improves the objective function.
Since at each iteration only two elements are changed (one in the
support and one in the off-support), this is a so-called ``2-opt''
method (see \cite{P98}). The swaps are always chosen to be between
the index corresponding to components in the current iterate $\bx_{k-1}$
with the smallest absolute value and the off-support index corresponding
to the component of $\nabla f(\bx_{k-1})=4\sum_{i}(\mathbf{x}_{k-1}^{T}\mathbf{A}_{i}\mathbf{x}_{k-1}-\mathbf{c}_{i})\mathbf{A}_{i}\mathbf{x}_{k-1}$
with the largest absolute value. This process continues as long as
the objective function decreases and stops when no improvement can
be made. A detailed description of the method is given in Algorithm
\ref{alg:2opt}.

\begin{algorithm}
\caption{\label{alg:2opt}2-opt}

\textbf{Input}: ($\bba_{i},y_{i}$).\\
 $\bba_{i}\in\real^{N\times N},i=1,2,\ldots,N$ - symmetric matrices.\\
 $y_{i}\in\real,i=1,2,\ldots,N.$\\

\textbf{Output}: $\bx$ - a suggested solution for problem (\ref{eq:2reform}).\\
 $T$ - total number of required swaps. \\
 \line(1,0){250}
\begin{enumerate}
\item \textbf{Initialization:}

\begin{enumerate}
\item Set $T=0$.
\item Generate a random index set $S_{0}(|S_{0}|=s)$ satisfying the support
constraints ($J_{1}\subseteq S_{0}\subseteq J_{2}$).
\item Invoke the DGN method with parameters $(\bba_{i},y_{i},S_{0},10^{-4},100)$
and obtain an output $\bz_{0}$. Set $\bx_{0}=\bbu_{S_{0}}\bz_{0}$.
\end{enumerate}
\item \textbf{General Step ($k=1,2,\ldots$):}

\begin{enumerate}
\item Let $i$ be the index from $S_{k-1}\backslash J_{1}$ corresponding
to the component of $\bx_{k-1}$ with the smallest absolute value.
Let $j$ be the index from $S_{k-1}^{c}\cap J_{2}$ corresponding
to the component of $\nabla f(\bx_{k-1})$ with the highest absolute
value.
\item Set $\tilde{S}=S_{k-1}$, and make a swap between the indices $i$
and $j$
\[
\tilde{S}=(S_{k-1}\backslash\{i\})\cup\{j\}.
\]
 Invoke DGN with input $(\bba_{i},y_{i},\tilde{S},10^{-4},100)$ and
obtain an output $\tilde{\bz}$. Set $\tilde{\bx}=\bbu_{S}\tilde{\bz}$.
Advance $T$: $T\leftarrow T+1$. \\
 If $f(\tilde{\bx})<f(\bx_{k-1})$, then set $S_{k}=\tilde{S},\bx_{k}=\tilde{\bx}$,
advance $k$ and goto 2.a.
\item If none of the swaps resulted with a better objective function value,
then STOP. The output is $\bx=\bx_{k-1}$ and $T$. \end{enumerate}
\end{enumerate}
\end{algorithm}

\subsection{The GESPAR Algorithm}

The 2-opt method can have the tendency to get stuck at local optima
points. Therefore, our final algorithm, which we call GESPAR, is a
restarted version of 2-opt. The 2-opt method is repeatedly invoked
with different initial random support sets until the resulting objective
function value is smaller than a certain threshold (success) or the
number of maximum allowed total number of swaps was passed (failure).
A detailed description of the method is given in Algorithm \ref{gespar}.
One element of our specific implementation that is not described in
Algorithm \ref{gespar} is the incorporation of random weights added
to the objective function, giving randomly different weights to the
different measurements. Namely, the objective function used is actually
chosen as $f(\bx)=\sum_{i=1}^{N}w_{i}(\bx^{T}\bba_{i}\bx-y_{i})^{2}$
with \textbf{$w_{i}=1$} or \textbf{$2$ }with equal probability.
The random generation of weights is done each time the DGN procedure
is invoked. We observed that this modification reduced the probability
of the 2-opt procedure to get stuck in non-optimal points.

\begin{algorithm}
\caption{\label{gespar}GESPAR}

\textbf{Input:} ($\bba_{i},y_{i},\tau,{\rm ITER}$).\\
 $\bba_{i}\in\real^{N\times N},i=1,2,\ldots,N$ - symmetric matrices.\\
 $y_{i}\in\real,i=1,2,\ldots,N.$\\
 $\tau$ - threshold parameter. \\
 ITER - Maximum allowed total number of swaps. \\

\textbf{Output:} $\bx$ - an optimal (or suboptimal) solution of (\ref{eq:2reform}).

\line(1,0){250}

\textbf{Initialization}. Set $C=0,k=0$. \\

\begin{itemize}
\item \textbf{Repeat}\\
 Invoke the 2-opt method with input ($\bba_{i},y_{i}$) and obtain
an output $\bx$ and $T$. Set $\bx_{k}=\bx,C=C+T$ and advance $k$:
$k\leftarrow k+1$.\\
 \textbf{Until} $f(\bx)<\tau$ or $C>{\rm ITER}$.
\item The output is $\bx_{\ell}$ where ${\ell}=\underset{m=0,1...k-1}{\text{argmin}}f(\bx_{m})$. \end{itemize}
\end{algorithm}

\section{Fourier Implementation Details}

\label{sec:f_implement} In principle, GESPAR may be used to find
sparse solutions to any system of quadratic equations, i.e. problems
of the form:
\begin{equation}
\begin{array}{ll}
\min_{\bx} & \sum_{i=1}^{N}(\bx^{T}\bba_{i}\bx-y_{i})^{2}\\
\mbox{s.t.} & \|\bx\|_{0}\leq s,\\
 & \bx\in\mathbb{R}^{N}.
\end{array}
\end{equation}
 However, when the matrices $\bba_{i}$ correspond to transforms that
can be implemented efficiently, GESPAR takes on a particularly simple
form.

For example, consider the case in which $\{\bba_{i}\}$ represent
Fourier measurements. In this case, the creation and storing of the
matrices $\bba_{i}$ defined in Section~\ref{sec:Problem-Formulation},
can be avoided in the implementation, by using the FFT. Specifically,
to calculate the weighted objective function, we note that
\begin{equation}
f(\bx)=\sum_{i=1}^{N}w_{i}(\bx^{T}\bba_{i}\bx-y_{i})^{2}=\sum_{i=1}^{N}w_{i}(|\hat{x}_{i}|^{2}-y_{i})^{2}
\end{equation}
 where $\hat{x}_{i}$ is the $i$th DFT component of $\bx$, which
can be computed via the FFT. Clearly, $J(\bz)$, which is used in the DGN procedure (Algorithm \ref{alg:dgn}) can also be computed
efficiently since $\bbb_{i}=\bbu_{S}^{T}\bba_{i}\bbu_{S}$ only involves
a small ($s$) number of columns of the Fourier matrix $\mathbf{F}$.

The FFT can also be used in the calculation of the gradient $\nabla f(\bx)$,
used in the 2-opt stage \ref{alg:2opt}:
\begin{align}
\nabla f(\bx) & =4\sum_{i}w_{i}(\mathbf{x}^{T}\mathbf{A}_{i}\mathbf{x}-y_{i})\mathbf{A}_{i}\mathbf{x}\nonumber \\& =
4N \mbox{IFFT}[(|\hat{x}_i|^2-y_i)w_i\hat{x}_i].
\end{align}
Consequently, in no step of the algorithm is it necessary to calculate
the set of matrices $\mathbf{A}_{i}$ explicitly.

This fact is even more important in the 2D Fourier phase
retrieval problem, as the relevant vector sizes become very large.
Since a major advantage of GESPAR over other methods (e.g. SDP based)
is its low computational cost, GESPAR may be used to find a sparse
solution to the 2D Fourier phase retrieval - or phase retrieval of
images. The only adjustments needed in the algorithm are in the implementation,
for example, using FFT2 instead of storing the large matrices $\bba_{i}$.

Figure \ref{fig:2D-Fourier} shows a recovery example of a sparse
$195\times195$ pixel image, comprised of $s=15$ circles at random
locations and random values on a grid containing $225$ points, recovered
from its $38,025$ 2D-Fourier magnitude measurements, using GESPAR.
The dictionary used in this example contains 225 elements consisting
of non-overlapping circles located on a $15\times15$ point cartesian
grid, each with a 13 pixel diameter. The solution took 80 seconds.
Solving the same problem using the sparse Fienup algorithm did not
yield a successful reconstruction, and using the SDP method is not
practical due to the large matrix sizes.

Further investigation of the algorithm's performance in the 2D case
is presented in Section~\ref{sec:Numerical-Simulations}.

\begin{figure*}
\begin{centering}
\includegraphics[width=0.95\textwidth]{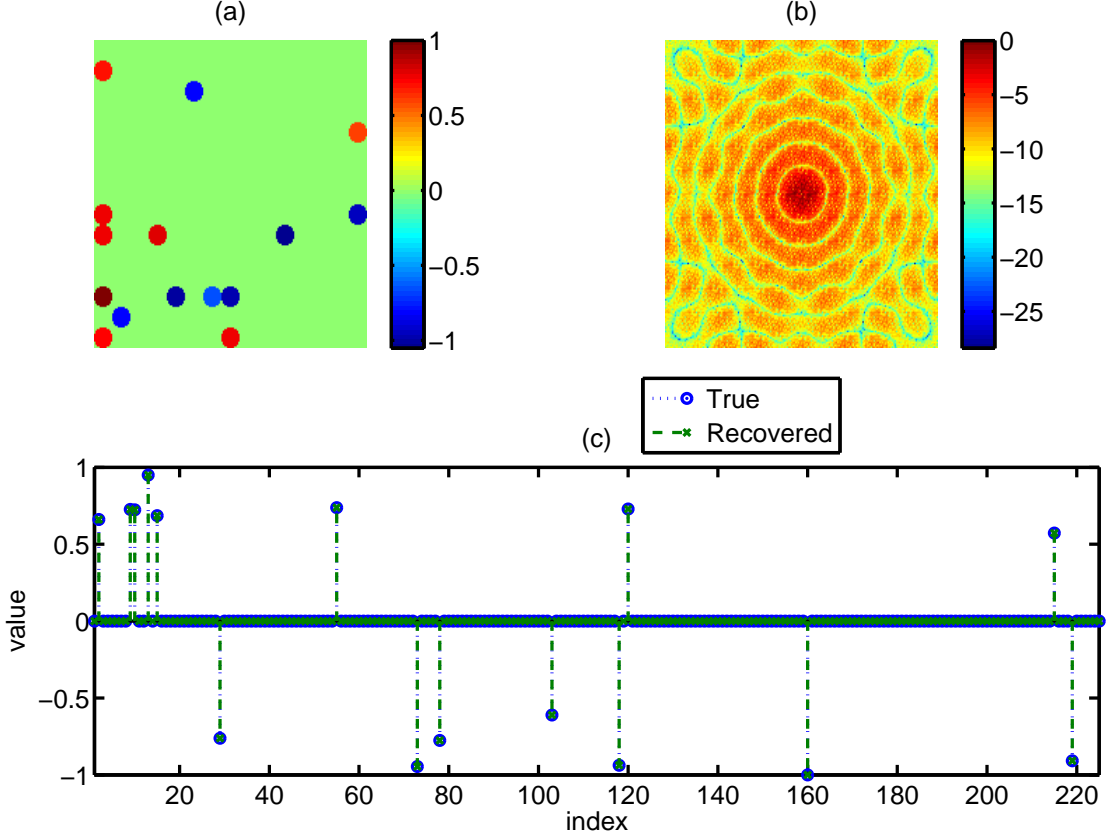}
\par\end{centering}

\caption{\label{fig:2D-Fourier}2D Fourier phase retrieval example. (a) True
$195\times195$ sparse circle image ($s=15$ circles). (b) Measured
2D Fourier magnitude ($38,025$ measurements, log scale). (c) True
and recovered coefficient vectors, corresponding to circle amplitudes
at each of the $225$ grid points.}
\end{figure*}

\section{\label{sec:Numerical-Simulations}Numerical Simulations}

In order to demonstrate the performance of GESPAR, we conduct several
numerical simulations. The algorithm is compared to other existing
methods, and is evaluated in terms of signal-recovery accuracy, computational
efficiency, and robustness to noise.

\subsection{Signal-recovery Accuracy\label{sub:Signal-recovery-Accuracy}}

In this subsection we examine the recovery success rate of GESPAR
as a function of the number of non-zero elements in the signal. A
runtime comparison of the tested methods is also performed.

We choose $\bar{\bx}$ as a random vector of length $n$. The vector
contains uniformly distributed values in the range $[-4,-3]\cup[3,4]$
in $s$ randomly chosen elements. The $N$ point DFT of the signal
is calculated, and its magnitude-square is taken as $\by$, the vector
of measurements. The $2n-1$ point correlation is also calculated.
In order to recover the unknown vector $\bx$, the GESPAR algorithm
is used with $\tau=10^{-4}$ and $ITER=6400$. We also test two other
algorithms for comparison purposes: An SDP based algorithm (Algorithm
2, \cite{Babak}), and an iterative Fienup algorithm with a sparsity
constraint \cite{sparseFienup}. In our simulation $n=64$ and $N=128$.
The Sparse-Fienup algorithm is run using $100$ random initial points,
out of which the chosen solution is the one that best matches the
measurements. Namely, $\hat{\mathbf{x}}$ is selected as the $s$
sparse output of the Sparse-Fienup algorithm with the minimal cost
$f(\bx)=\sum_{i=1}^{N}(|\bbf_{i}\bx|^{2}-y_{i})^{2}$ out of the $100$
runs.

Signal recovery results of the numerical simulation are shown in Fig.~\ref{fig:recovery},
where the probability of successful recovery is plotted for different
sparsity levels. The success probability is defined as the ratio of
correctly recovered signals $\bx$ out of $100$ simulations. In each
simulation both the support and the signal values are randomly selected.
The three algorithms (GESPAR, SDP and Sparse-Fienup) are compared.
The results clearly show that GESPAR outperforms the other methods
in terms of probability of successful recovery - over 90\% successful
recovery up to $s=15$, vs. $s=8$ and $s=7$ in the other two techniques.

Average runtime comparison of the three algorithms is shown in Table~\ref{table:runtime}
for $n=64$ and $N=128$. The runtime is averaged over all successful
recoveries. The computer used has an intel i5 CPU and 4GB of RAM.
As seen in the table, the SDP based algorithm is significantly slower
than the other two methods. Fienup iterations are slightly slower
than GESPAR and lead to a much lower success rate. In these simulations,
GESPAR is both fast and more accurate than its competitors.

\begin{figure}
\begin{centering}
\centering \includegraphics[width=0.9\columnwidth]{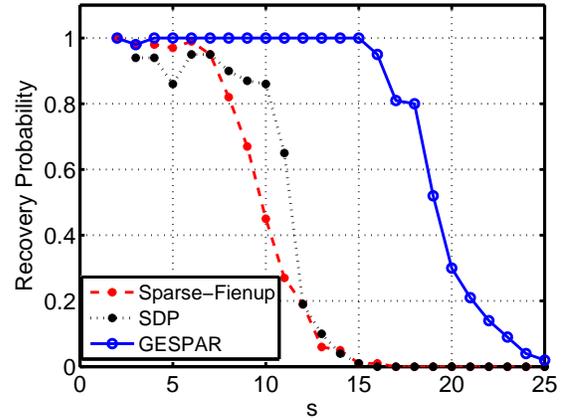}
\par\end{centering}

\caption{Recovery probability vs. sparsity (s)}

\label{fig:recovery}
\end{figure}

\begin{table}
\centering \caption{Runtime comparison}

\label{table:runtime}

\begin{tabular}{|>{\centering}p{0.08\linewidth}|>{\centering}p{0.09\columnwidth}|>{\centering}p{0.09\columnwidth}|c|>{\centering}p{0.09\columnwidth}|>{\centering}p{0.09\linewidth}|>{\centering}p{0.09\columnwidth}|}
\hline
 & \multicolumn{2}{c|}{SDP} & \multicolumn{2}{c|}{Sparse-Fienup} & \multicolumn{2}{c|}{GESPAR}\tabularnewline
\hline
\hline
 & recovery \%  & runtime sec  & recovery \%  & runtime sec  & recovery \%  & runtime sec\tabularnewline
\hline
\hline
$s=3$  & 0.93  & 1.32  & 0.98  & 0.09  & 1  & 0.04 \tabularnewline
\hline
\hline
$s=5$  & 0.86  & 1.78  & 0.97  & 0.12  & 1  & 0.05 \tabularnewline
\hline
\hline
$s=8$  & 0.9  & 3.85  & 0.82  & 0.50  & 1  & 0.06 \tabularnewline
\hline
\end{tabular}
\end{table}

\subsection{Sensitivity to exact sparsity knowledge}

Since the exact value of the signal's sparsity $s$ may not be known,
the performance of GESPAR is examined when only an upper limit on
$s$ is given. To this end we run GESPAR twice: Once with $s$ known
exactly at each realization, and once with only an upper limit on
$s$. The upper limit is taken as $25$. The other simulation settings
are the same as in Section\ref{sub:Signal-recovery-Accuracy}.

Figure \ref{fig:Effect-of-unknown} shows the probability for successful
recovery of the two simulations. The rather loose upper limit on $s$
does not seem to affect the results significantly--- in fact, the
performance is somewhat improved when allowing more nonzero elements
during the iterations.

\begin{figure}
\begin{centering}
\centering \includegraphics[width=0.9\columnwidth]{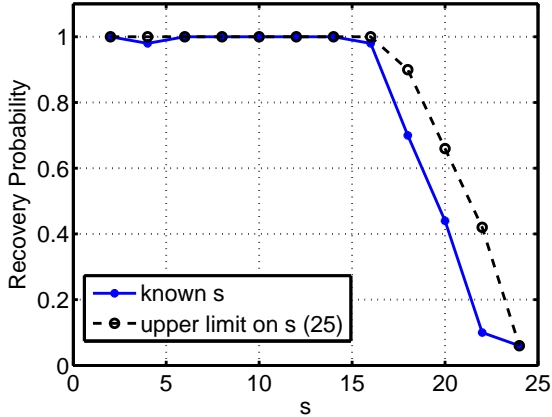}
\par\end{centering}

\caption{\label{fig:Effect-of-unknown}Effect of unknown exact sparsity level
$s$ on recovery probability. }
\end{figure}

\subsection{Effect of the number of allowed swaps}

One of the stopping criteria for the GESPAR algorithm is when the
total number of swaps exceeds a predefined parameter (the input parameter
$ITER$ in Algorithm \ref{gespar}). Naturally, increasing the allowed
number of index swaps will increase the probability of finding a correct
solution, but at the cost of increased computation time. It is therefore
important to quantify this effect, which is the purpose of the current
simulation.

We run GESPAR with the same parameters as in Section\ref{sub:Signal-recovery-Accuracy}
several times, where in each simulation a different value for the
parameter $ITER$ is used, in the range $[100,25600]$. Figure \ref{fig:ITER}
shows the results. As expected, increasing the number of possible
swaps increases the recovery probability. Note that increasing the
value of $ITER$ above $6400$ demonstrated no improvement in the
recovery results - for the unsuccessful recoveries, increasing the
number of swaps even up to $ITER=25600$ did not help. This means
that for these simulation values (e.g. $N=128,\, s<25$), using a
value of $ITER$ larger than $6400$ only increases computation time
without improving the results.

\begin{figure}
\begin{centering}
\includegraphics[width=0.9\columnwidth]{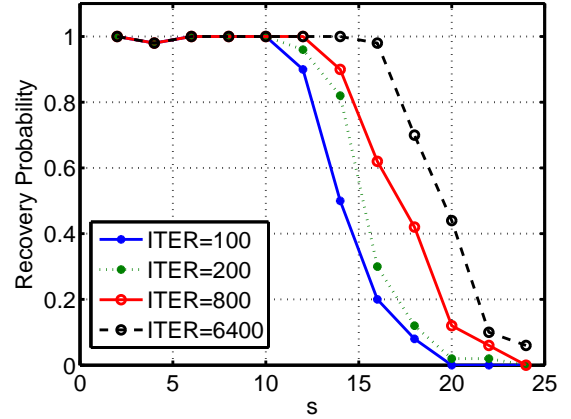}
\par\end{centering}

\centering{}\caption{\label{fig:ITER}Effect of number of swaps ($ITER$) on recovery probability}
\end{figure}

\subsection{Effect of oversampling and support information}

Here we examine the effect of oversampling and of autocorrelation-derived
support information. GESPAR is run on random vectors $\mathbf{x}$
of length $n=64$, with a varying amount of noiseless Fourier magnitude
measurements, obtained by the $N$ point DFT of \textbf{$\mathbf{x}$
}with $N=64,128,256$. In these cases, no support information was
used - i.e. $J_{1}=\{1\}$ and $J_{2}=\{1,2,\ldots,n\}$. In addition,
in order to investigate the effect of support information, we run
GESPAR with $n=64,N=128$ (i.e. oversampling by a factor of 2), and
use the support information derived from the autocorrelation sequence.
The results, shown in Fig. \ref{fig:Effect-of-oversampling}, clearly
show that both oversampling and support information improve performance.

\begin{figure}
\begin{centering}
\includegraphics[width=0.9\columnwidth]{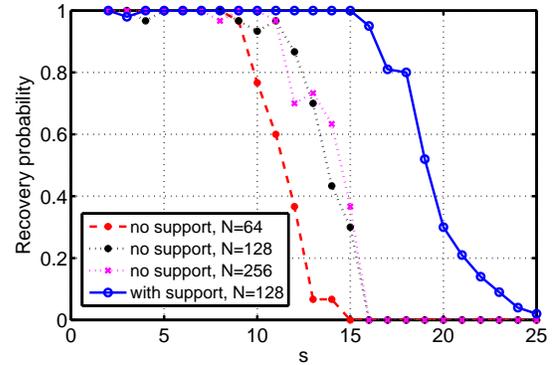}
\par\end{centering}

\caption{\label{fig:Effect-of-oversampling}Effect of oversampling and support
information from the autocorrelation sequence ($n=64$).}

\end{figure}

\subsection{Robustness to noise}

We now evaluate GESPAR as a function of SNR, and compare it with sparse
Fienup \cite{sparseFienup}. The SDP based method presented in \cite{Babak}
is not designed to deal with noise and therefore we did not apply
it here. The SDP approach of \cite{Henrik} considers random measurements,
and does not produce comparable results from direct Fourier measurements.

As in Section~\ref{sub:Signal-recovery-Accuracy}, we choose $\mathbf{\bar{x}}$
as a vector of length $n$, with $s$ randomly chosen elements containing
uniformly distributed values, and evaluate its $N$ point Fourier
magnitude-square. White-gaussian noise $\mathbf{v}$ is added to the
measurements, at different SNR values, defined as: $SNR=20\text{log}\frac{\|\mathbf{y\|}}{\|\mathbf{v\|}}$.
In order to recover the unknown vector $\bx$, the GESPAR algorithm
is used with $\tau=10^{-4}$ and $ITER=10000$, as well as the sparse-Fienup
algorithm, for comparison purposes. In our simulation $n=64$ and
$N=128$. The sparse-Fienup algorithm is run with a maximum of $1000$
iterations, and with $100$ random initial points.

Note that even with little noise, the information on the support obtained
by the zeros of the autocorrelation is no longer available. This is
due to the fact that in the presence of noise, there will be no true
zeros in the measured (or calculated) autocorrelation. In this case,
one might try to threshold the autocorrelation values, rendering small
autocorrelation values as zeros. However, this might result in zeroing
of small (yet non-zero) values of the true autocorrelation function.
Therefore, in the noisy case, we do not use support information obtained
by the autocorrelation function in GESPAR, namely $J_{2}=\{1,2,\ldots,n\}$.

Figure \ref{fig:Normalized-MSE-vs.} shows the normalized mean squared
reconstruction error (NMSE), defined as $NMSE=\frac{\|\mathbf{x}-\mathbf{\hat{x}}\|_{2}}{\|\mathbf{x}\|_{2}}$,
as a function of sparsity, for different SNR values. Each point represents
an average over 100 different random realizations. The performance
under different SNR values is plotted for GESPAR (full lines), and
for sparse-Fienup (dashed-lines). The performance of GESPAR naturally
improves as SNR increases, and it clearly outperforms sparse-Fienup
in terms of noise-robustness.

\begin{figure*}
\centering \includegraphics[width=0.9\textwidth]{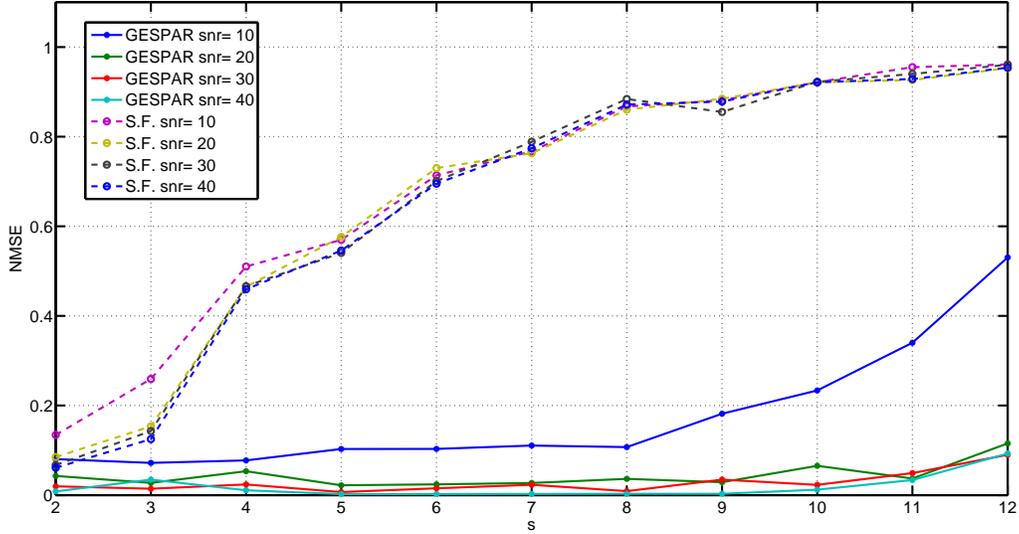}

\caption{\label{fig:Normalized-MSE-vs.}Normalized MSE vs. sparsity level.
The performance is plotted for several SNR values for GESPAR (full
lines) and Sparse-Fienup (S.F. - dashed lines).}
\end{figure*}

\subsection{\label{sub:Scalability}Scalability}

As one of the main advantages of GESPAR over SDP based methods is
its ability to solve large problems efficiently, we now examine its
performance for different vector sizes.

We simulate GESPAR for various values of $n\in[64,2048]$. In all
cases $N=2n$. The other simulation parameters are as in Section~\ref{sub:Signal-recovery-Accuracy}.
The recovery probability vs. sparsity $s$ for different vector lengths
is shown in Fig.~\ref{fig:Scalability}. The maximal sparsity $s$
allowing successful recovery is shown to increase with vector length
$n$, and seems to scale like $n^{1/3}$, which is consistent with
the same scaling observation presented in \cite{Babak}. The mean
reconstruction time for a signal with $n=512,\, s=35$ from $N=1024$
measurements, allowing $ITER=6400$ replacements, is $33.5$ seconds.
For comparison, a corresponding plot representing the scalability
of the sparse-Fienup algorithm is presented in Fig.~\ref{fig:Sparse-Fienup-scalability}.
Plotting a similar scalability plot for the SDP based method is not
possible due to the high computational cost which under our simulation
conditions limits the application of this method to around $n\sim400$.

\begin{figure}
\begin{centering}
\includegraphics[width=0.9\columnwidth]{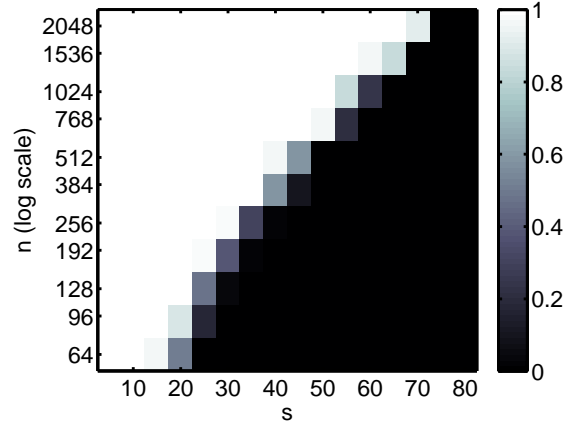}
\par\end{centering}

\caption{\label{fig:Scalability}1D-Scalability - GESPAR recovery probability
as a function of signal sparsity $s$, for various vector lengths
($n\in[64,2048]$), and with oversampling, i.e. $N=2n$. White corresponds
to high recovery probability.}
\end{figure}

\begin{figure}
\begin{centering}
\includegraphics[width=0.9\columnwidth]{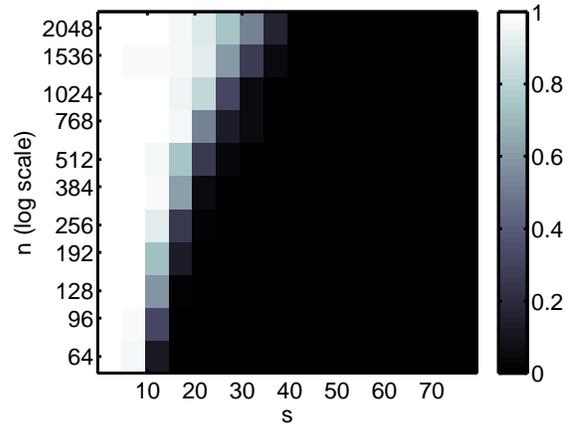}
\par\end{centering}

\caption{\label{fig:Sparse-Fienup-scalability}Sparse Fienup scalability -
recovery probability as a function of signal sparsity $s$, for various
vector lengths ($n\in[64,2048]$), and with oversampling, i.e. $N=2n$.
White corresponds to high recovery probability. }
\end{figure}

\subsection{Computation Time}

The most time consuming part of GESPAR is the matrix inversion process
in the DGN segment of the algorithm. Therefore, computation time scales
approximately linearly with the number of swaps - as each swap corresponds
to a single DGN iteration. The approximately linear dependence of
runtime in the number of swaps is displayed in Fig.~\ref{fig:Time-vs.-number}.
Each point in the plot represents the mean time it took GESPAR to
run $ITER$ iterations, averaged over 50 random input signals with
$N=128,\, n=64,s=10$.

A major factor that determines the computation time is the number
of index swaps required to find a solution. The mean number of swaps
as a function of $s,n$ is shown in Fig.~\ref{fig:Number-of-swaps}.
Beyond the successful recovery region (the white region in Fig.~\ref{fig:Scalability}),
the maximal number of swaps ($6400$) is used, without yielding a
correct solution.

\begin{figure}
\begin{centering}
\includegraphics[width=0.9\columnwidth]{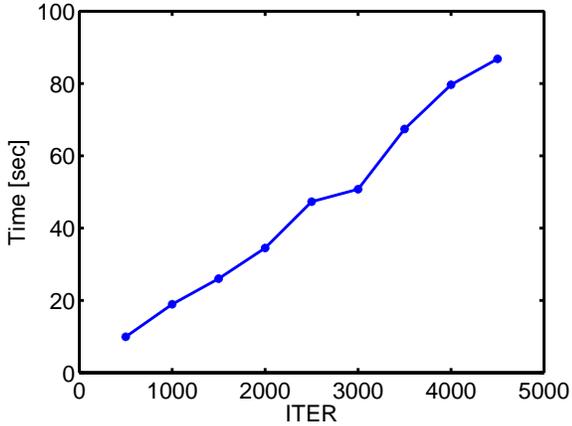}
\par\end{centering}

\caption{\label{fig:Time-vs.-number}Time vs. number of swaps ($ITER$).}
\end{figure}

\begin{figure}
\begin{centering}
\includegraphics[width=0.9\columnwidth]{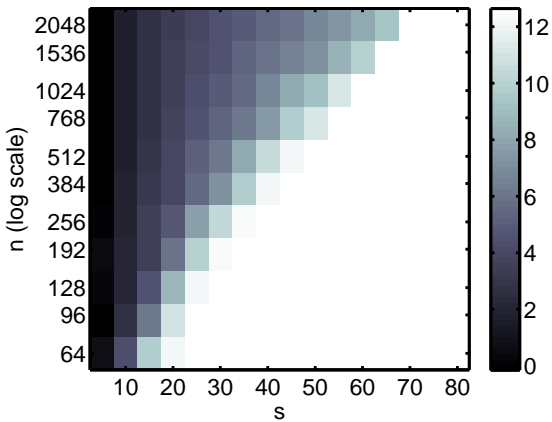}
\par\end{centering}

\caption{\label{fig:Number-of-swaps}Number of swaps as a function of $s$
and $n$. The colorbar is in $log_{2}$ scale, e.g. $10\Rightarrow2^{10}=1024$
swaps. }
\end{figure}

\subsection{Two-Dimensional Fourier Phase Retrieval}

In this section we apply GESPAR to 2D Fourier phase retrieval problems,
showing its ability to solve large scale problems efficiently.

We generate random $s-$sparse 2D signals of sizes $\sqrt{n}\times\sqrt{n}$,
with varying values for $s$ and $n$, in the ranges $s\in[2:82]$
and $n\in[256:6400]$. Each signal is recovered from the noiseless
magnitude of its 2D DFT, with no oversampling, using GESPAR. Similarly
to the 1D noisy simulation, no autocorrelation-derived support information
was used here. The parameter $ITER$ is taken as $6400$. The recovery
probability vs. sparsity $s$ for different vector lengths is shown
in Fig.~\ref{fig:Scalability-in-2D}. Similarly to the 1D case, the
maximal sparsity allowing successful recovery increases with $n$.
For comparison, Fig.~\ref{fig:Sparse-Fienup-2D-Scalability--} shows
the result of a sparse-Fienup scalability simulation for the 2D case,
under the same conditions, with 200 initial points per signal (increasing
this parameter did not affect the results significantly). GESPAR is
shown to outperform the sparse-Fienup method in the 2D case as well.\textbf{
}As in the 1D case, a comparison to SDP based methods is not included
here, since applying the SDP based method on the 2D case is very difficult
due to memory limitations.

A comparison between GESPAR and the sparse-Fienup method is shown
in Fig.~\ref{fig:Runtime-comparison--}. The comparison shows the
average time a successful recovery in the simulation took, as a function
of vector size $n$. Sparse-Fienup is seen to be faster, however
comparing Fig.~\ref{fig:Scalability-in-2D} to Fig.~\ref{fig:Sparse-Fienup-2D-Scalability--}
shows that GESPAR can recover signals up with a higher value of $s$:
For example, when $n=6400$, GESPAR recovers with very high probability
signals up to sparsity $s=57$, while sparse Fienup only recovers up to
$s=42$.

\begin{figure}
\begin{centering}
\includegraphics[width=0.9\columnwidth]{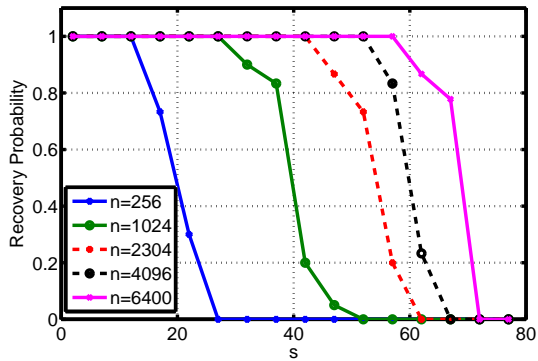}
\par\end{centering}

\caption{\label{fig:Scalability-in-2D}GESPAR 2D-Scalability - recovery probability
as a function of signal sparsity for various image sizes ($n=256,1024,2304,4096,6400$).}
\end{figure}

\begin{figure}
\centering{}\includegraphics[width=0.9\columnwidth]{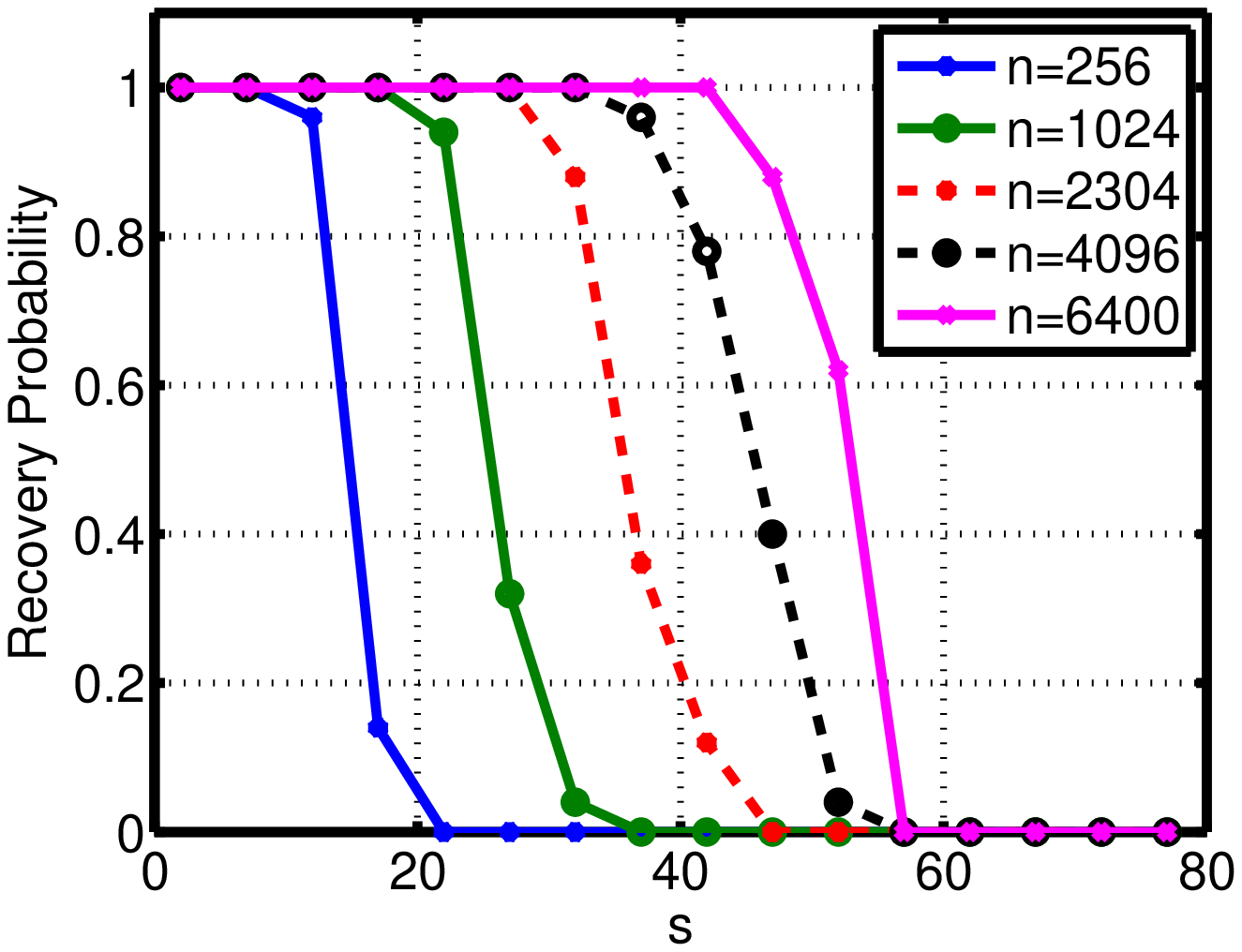}\caption{\label{fig:Sparse-Fienup-2D-Scalability--}Sparse-Fienup 2D-Scalability
- recovery probability as a function of signal sparsity for various
image sizes ($n=256,1024,2304,4096,6400$).}
\end{figure}

\begin{figure}
\begin{centering}
\includegraphics[width=0.9\columnwidth]{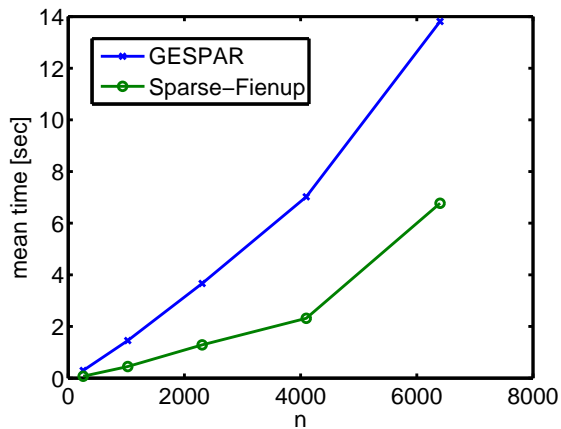}
\par\end{centering}

\caption{\label{fig:Runtime-comparison--}Runtime comparison - average computation
time for a succesful 2D recovery, for GESPAR and for sparse-Fienup,
as a function of $n$.}

\end{figure}

\section{Conclusion}

We proposed and demonstrated GESPAR - a fast algorithm for recovering
a sparse vector from its Fourier magnitude, or more generally, from
quadratic measurements. We showed via simulations that GESPAR outperforms
alternative approaches suggested for this problem in terms of complexity
and success probability. The algorithm does not require matrix-lifting,
and therefore is potentially suitable for large scale problems such
as 2D images. The simulations demonstrated robustness of GESPAR to
noise and other inexact knowledge, as well as its ability to successfully
treat a variety of phase retrieval problems in one and two dimensions.

\bibliography{IEEEabrv,GESPAR_arxiv_bib}

\appendix
%dummy comment inserted by tex2lyx to ensure that this paragraph is not empty

\section{Proof of Theorem}

\label{sec:proof} Define the vector-valued function $\bh$ by
\[
\bh(\bz)=(h_{1}(\bz),h_{2}(\bz),\ldots,h_{N}(\bz))^{T},
\]
 with $h_{i}(\bz)=\bz^{T}\bbb_{i}\bz-y_{i}$ With this notation, the
vector $\bb_{k}$ can be written as
\[
\bb_{k}=J(\bz_{k-1})\bz_{k-1}-\bh(\bz_{k-1}),
\]
 and the solution of the least-squares problem is
\begin{eqnarray}
\tilde{\bz}_{k} & = & (J(\bz_{k-1})^{T}J(\bz_{k-1}))^{-1}J(\bz_{k-1})^{T}\nonumber \\
 &  & (J(\bz_{k-1})\bz_{k-1}-\bh(\bz_{k-1}))\nonumber \\
 & = & \bz_{k-1}-(J(\bz_{k-1})^{T}J(\bz_{k-1}))^{-1}J(\bz_{k-1})^{T}\bh(\bz_{k-1})\nonumber \\
 & = & \bz_{k-1}-\frac{1}{2}(J(\bz_{k-1})^{T}J(\bz_{k-1}))^{-1}\nabla g(\bz_{k-1}).\label{eq:des}
\end{eqnarray}
 Finally,
\begin{equation}
\bd_{k}=\frac{1}{2}(J(\bz_{k-1})^{T}J(\bz_{k-1}))^{-1}\nabla g(\bz_{k-1}).\label{def:dk}
\end{equation}
 From (\ref{eq:des}) it follows that $-\bd_{k}$ is a descent direction
since
\begin{eqnarray}
\lefteqn{-\bd_{k}^{T}\nabla g(\bz_{k-1})}\nonumber \\
 & = & -\frac{1}{2}\nabla g(\bz_{k-1})(J(\bz_{k-1})^{T}J(\bz_{k-1}))^{-1}\nabla g(\bz_{k-1})<0.
\end{eqnarray}

We now show that the sequence generated by the DGN method is bounded.
Indeed, since $-\bd_{k}$ is a descent direction,
\begin{eqnarray*}
\sqrt{g(\bz_{0})} & \geq & \sqrt{g(\bz_{k})}\\
 & = & \sqrt{{\textstyle \sum_{i=1}^{N}(\bz_{k}^{T}\bbb_{i}\bz_{k}-y_{i})^{2}}}\\
 & \geq & \frac{1}{\sqrt{N}}{\textstyle \sum_{i=1}^{N}|\bz_{k}^{T}\bbb_{i}\bz_{k}-y_{i}|}\\
 & \geq & \frac{1}{\sqrt{N}}\left(\bz_{k}^{T}({\textstyle \sum_{i=1}^{N}\bbb_{i})\bz_{k}-\sum_{i=1}^{N}y_{i}}\right),
\end{eqnarray*}
 where the second inequality is due to Cauchy-Schwarz and the last
inequality is a result of the fact that $\sum_{i=1}^{N}\bbb_{i}\succ0$
and $y_{i}\geq0$. Therefore,
\[
\|\bz_{k}\|^{2}\leq\frac{1}{\lambda_{\min}(\sum_{i=1}^{N}\bbb_{i})}\left(\sqrt{Ng(\bz_{0})}+{\textstyle \sum_{i=1}^{N}y_{i}}\right)\equiv\alpha,
\]
 proving that $\{\bz_{k}\}\subseteq B[\bo,\sqrt{\alpha}]=\{\bz:\|\bz\|\leq\sqrt{\alpha}\}.$

Since $g$ is twice continuously differentiable, and $J(\bz)$ is
continuous, it follows that there exists $M>0$ and $\Lambda>0$ such
that $\lambda_{\max}(\nabla^{2}g(\bz))\leq M$ and $\lambda_{\max}(J(\bz)^{T}J(\bz))\leq\Lambda$
for any $\bz\in B[\bo,2\sqrt{\alpha}]$. In addition, since $\nabla g$
is continuous over $B[\bo,2\sqrt{\alpha}]$, there exist $\beta>0$
such that $\|\nabla g(\bz)\|\leq\beta$ for all $\bz\in B[0,2\sqrt{\alpha}]$.
Therefore, by (\ref{def:dk}) it follows that
\begin{equation}
\|\bd_{k}\|\leq\frac{\beta}{2\underline{\lambda}}.\label{bound:dk}
\end{equation}
 The fact that $\lambda_{\max}(\nabla^{2}g(\bz))\leq M$ for all $\bz\in B[\bo,2\sqrt{\alpha}]$
implies that $\nabla g$ is Lipschitz continuous over $B[\bo,2\sqrt{\alpha}]$
with parameter $M>0$. Hence, by the descent lemma \cite{B99},
\begin{equation}
g(\by)\leq g(\bx)+\nabla g(\bx)^{T}(\by-\bx)+\frac{M}{2}\|\by-\bx\|^{2}\label{descent_lemma}
\end{equation}
 for any $\bx,\by\in B[\bo,2\sqrt{\alpha}]$.

From $\|\bz_{k-1}\|\leq\sqrt{\alpha}$ and $\|\bd_{k}\|\leq\beta/(2\underline{\lambda})$,
it follows that $\bz_{k-1}-t\bd_{k}\in B[\bo,2\sqrt{\alpha}]$ whenever
$t\leq\frac{2\underline{\lambda}\sqrt{\alpha}}{\beta}$. Therefore,
we can plug $\by=\bz_{k-1}-t\bd_{k}$ and $\bx=\bz_{k-1}$ into (\ref{descent_lemma})
to obtain
\[
g(\bz_{k-1}-t\bd_{k})\leq g(\bz_{k-1})-t\nabla g(\bz_{k-1})^{T}\bd_{k}+\frac{Mt^{2}}{2}\|\bd_{k}\|^{2}.
\]
 Using \eqref{def:dk},
\begin{eqnarray*}
\|\bd_{k}\|^{2} & = & \frac{1}{4}\nabla g(\bx)^{T}(J(\bz_{k})^{T}J(\bz_{k}))^{-2}\nabla g(\bx)\\
 & \leq & \frac{1}{4\underline{\lambda}}\nabla g(\bx)^{T}(J(\bz_{k})^{T}J(\bz_{k}))^{-1}\nabla g(\bx)\\
 & = & \frac{1}{2\underline{\lambda}}\nabla g(\bz_{k-1})^{T}\bd_{k},
\end{eqnarray*}
 which yields
\[
g(\bz_{k-1})-g(\bz_{k-1}-t\bd_{k})\geq t\left(1-\frac{M}{4\underline{\lambda}}t\right)\nabla g(\bz_{k-1})^{T}\bd_{k}.
\]
 Therefore, if $t\leq\min\left\{ \frac{2\underline{\lambda}}{M},\frac{2\underline{\lambda}\sqrt{\alpha}}{\beta}\right\} $,
then
\begin{equation}
g(\bz_{k-1})-g(\bz_{k-1}-t\bd_{k})\geq\frac{t}{2}\nabla g(\bz_{k-1})^{T}\bd_{k}.\label{gzk}
\end{equation}

By the way the backtracking procedure is defined, we have that either
$t_{k}=1$ or $2t_{k}>\min\left\{ \frac{2\underline{\lambda}}{M},\frac{2\underline{\lambda}\sqrt{\alpha}}{\beta}\right\} $
and hence $t_{k}\geq\min\left\{ 1,\frac{\underline{\lambda}}{M},\frac{\underline{\lambda}\sqrt{\alpha}}{\beta}\right\} $.
Together with (\ref{gzk}) this results in the inequality
\begin{eqnarray*}
g(\bz_{k-1})-g(\bz_{k}) & \geq & \frac{t_{k}}{2}\nabla g(\bz_{k-1})^{T}\bd_{k}\\
 & \geq & \min\left\{ \frac{1}{2},\frac{\underline{\lambda}}{2M},\frac{\underline{\lambda}\sqrt{\alpha}}{2\beta}\right\} \nabla g(\bz_{k-1})^{T}\bd_{k}.
\end{eqnarray*}
 Since
\begin{eqnarray*}
\lefteqn{\nabla g(\bz_{k-1})^{T}\bd_{k}}\\
 & = & \nabla g(\bx_{k-1})^{T}(J(\bz_{k-1})^{T}J(\bz_{k-1}))^{-1}\nabla g(\bz_{k-1})\\
 & \geq & \frac{1}{\Lambda}\|\nabla g(\bz_{k-1})\|^{2},
\end{eqnarray*}
 we conclude that
\begin{equation}
g(\bz_{k-1})-g(\bz_{k})\geq C\|\nabla g(\bz_{k-1})\|^{2},\label{gdif}
\end{equation}
 where $C=\min\left\{ \frac{1}{2\Lambda},\frac{\underline{\lambda}}{2M\Lambda},\frac{\underline{\lambda}\sqrt{\alpha}}{2\beta\Lambda}\right\} $.
Noting that $\{g(\bz_{k})\}$ is a bounded below and nonincreasing
sequence, it follows that it converges. The left-hand side of (\ref{gdif})
therefore converges to zero and we obtain the result that $\nabla g(\bz_{k})$
converges to zero as $k$ tends to infinity. This fact also readily
implies that all accumulation points of the sequence are stationary.
Summing the inequality (\ref{gdif}) over $p=1,2,\ldots,k+1$ we obtain
that
\[
g(\bz_{0})-g(\bz_{k+1})\geq C{\textstyle \sum_{p=1}^{k+1}\|\nabla g(\bz_{p-1})\|^{2},}
\]
 and consequently, (also using the fact that $g(\bz_{k+1})\geq0$),
\[
g(\bz_{0})\geq C{\displaystyle (k+1)\min_{p=1,\ldots,k+1}\|\nabla g(\bz_{p-1})\|^{2},}
\]
 from which the inequality (\ref{mr}) follows. $\Box$
\end{document}